\newcommand{\field}[1]{\mathbb{#1}}
\newcommand{\Z}{\field{Z}}
\newcommand{\R}{\field{R}}
\newcommand{\cA}{{\cal A}}
\newcommand{\cB}{{\cal B}}
\newcommand{\cD}{{\cal D}}
\newcommand{\cG}{{\cal G}}
\newcommand{\cP}{{\cal P}}
\newcommand{\cS}{{\cal S}}
\newcommand{\cT}{{\cal T}}
\newcommand{\sP}{\cP}
\newcommand{\sG}{\cG}
\newcommand{\Gr}{\smash{{\sG\kern-1.5pt}_q\kern-0.5pt(n,k)}}
\newcommand{\Grtwo}{\smash{{\sG\kern-1.5pt}_2\kern-0.5pt(n,k)}}
\newcommand{\Gkone}{\smash{{\sG\kern-1.5pt}_q\kern-0.5pt(n,k_1)}}
\newcommand{\Gktwo}{\smash{{\sG\kern-1.5pt}_q\kern-0.5pt(n,k_2)}}
\newcommand{\Ps}{\smash{{\sP\kern-2.0pt}_q\kern-0.5pt(n)}}
\newcommand{\deff}{\mbox{$\stackrel{\rm def}{=}$}}
\newtheorem{theorem}{Theorem}
\newtheorem{lemma}{Lemma}
\newtheorem{remark}{Remark}
\newtheorem{example}{Example}
\begin{document}

\title{Sidon Sequences and Doubly Periodic\\ Two-Dimensional Synchronization Patterns}

\author{\authorblockN{Tuvi Etzion}
\authorblockA{Department of Computer Science\\
Technion-Israel Institute of Technology\\
Haifa 32000, Israel \\
Email: etzion@cs.technion.ac.il}}

\maketitle
\begin{abstract}
Sidon sequences and their generalizations have found during the
years and especially recently various applications in coding
theory. One of the most important applications of these sequences
is in the connection of synchronization patterns. A few
constructions of two-dimensional synchronization patterns are
based on these sequences. In this paper we present sufficient
conditions that a two-dimensional synchronization pattern can be
transformed into a Sidon sequence. We also present a new
construction for Sidon sequences over an alphabet of size
$q(q-1)$, where $q$ is a power of a prime.
\end{abstract}

\section{Introduction}
\label{sec:introduction}

Let $\cA$ be an abelian group and let $\cD = \{ a_1 , a_2 , \ldots
, a_m \} \subseteq \cA$ be a subset of $m$ distinct elements of
$\cA$. $\cD$ is a \emph{Sidon sequence} (or a $B_2$-sequence) over
$\cA$ if all the sums $a_{i_1} + a_{i_2}$ with $1 \leq i_1 \leq
i_2 \leq m$ are distinct (if $i_1 < i_2$ in the definition the
sequence is called a \emph{weak Sidon sequences}). Sidon sequences
have found many applications in coding and communication. For
example, weak Sidon sequences are used for construction of
constant weight codes with minimum Hamming distance 6~\cite{BSSS},
and constructions of location-correcting codes~\cite{RoSe96}.
Sidon sequences were used in constructions of two-dimensional
synchronization patterns~\cite{BEMP1,Etz11}. There is a
generalization to $B_h$ sequences (all sums of $h$ elements are
distinct) and they applied for example in multihop paths related
to wireless sensor networds~\cite{BEMP2} and error-correcting
codes for rank modulation~\cite{BaMa10}. A comprehensive survey on
$B_2$-sequences and their generalizations was given by
O'Bryant~\cite{Bry04}. Even so in a Sidon sequence all sums of
pairs of elements from $\cD$ (not necessarily distinct elements)
are distinct there is a trivial connection to a set in which all
differences of ordered pairs of elements are distinct.

\begin{theorem}
\label{thm:diff} A subset $\cD = \{ a_1 , a_2 , \ldots , a_m \}
\subseteq \cA$ is a Sidon sequence over $\cA$ if and only if all
the differences $a_{i_1} - a_{i_2}$ with $1 \leq i_1 \neq i_2 \leq
m$ are distinct in $\cA$.
\end{theorem}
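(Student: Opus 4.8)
The plan is to prove the biconditional by showing that the "all pairwise sums distinct" condition and the "all ordered differences distinct" condition are in fact equivalent reformulations of the same collision constraint. The natural strategy is to establish the contrapositive in both directions: a collision among sums corresponds exactly to a collision among differences. Specifically, I would translate an equation of the form $a_{i_1}+a_{i_2}=a_{j_1}+a_{j_2}$ into an equation of the form $a_{i_1}-a_{j_1}=a_{j_2}-a_{i_2}$ simply by moving terms across the equality, using that $\cA$ is an abelian group so that subtraction (addition of inverses) is well-defined and commutative.

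Let me sketch the forward direction. Suppose $\cD$ is \emph{not} a Sidon sequence, so there exist index pairs $(i_1,i_2)$ and $(j_1,j_2)$ with $\{i_1,i_2\}\neq\{j_1,j_2\}$ (as multisets) yet $a_{i_1}+a_{i_2}=a_{j_1}+a_{j_2}$. Rearranging gives $a_{i_1}-a_{j_1}=a_{j_2}-a_{i_2}$. I would then argue that after this rearrangement the two resulting ordered pairs $(i_1,j_1)$ and $(j_2,i_2)$ are genuinely distinct ordered pairs with distinct entries, so that this is a bona fide repeated difference — contradicting the difference-distinctness hypothesis. Conversely, if two ordered differences coincide, $a_{i_1}-a_{j_1}=a_{j_2}-a_{i_2}$, I would move terms back to obtain a repeated sum $a_{i_1}+a_{i_2}=a_{j_2}+a_{j_1}$, again tracking the indices to confirm the sum-collision is nontrivial.

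**The main obstacle** will be the careful bookkeeping of \emph{degenerate and boundary index cases}, since the two sides of the theorem use subtly different index conventions: sums are taken over unordered pairs with $i_1\le i_2$ (repeats allowed, so $a_{i_1}=a_{i_2}$ is permitted), whereas differences are taken over ordered pairs with $i_1\neq i_2$. I would have to verify that the rearrangement never produces a "fake" collision of the form $x=x$ and never forces an index to equal itself on the difference side. In particular, the case where one of the repeated sums already involves equal indices (a term like $2a_i$) must be checked to ensure it maps to a legitimate difference collision with distinct indices, and symmetrically that a difference collision with all indices distinct does not collapse to the trivial sum identity. Once these finitely many index patterns are enumerated and dismissed, the algebraic manipulation is immediate and the equivalence follows.
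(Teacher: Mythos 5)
Your proposal is correct: the paper states Theorem~\ref{thm:diff} without proof (calling it a trivial connection), and your rearrangement argument --- converting a sum collision $a_{i_1}+a_{i_2}=a_{j_1}+a_{j_2}$ into a difference collision $a_{i_1}-a_{j_1}=a_{j_2}-a_{i_2}$ and back, using only that $\cA$ is abelian --- is exactly the standard argument being invoked. You also correctly identify the one point requiring care, namely that the index bookkeeping never degenerates: since the $a_i$ are distinct, multiset inequality of $\{i_1,i_2\}$ and $\{j_1,j_2\}$ forces $i_1\neq j_1$, $i_2\neq j_2$, and $(i_1,j_1)\neq(j_2,i_2)$ as ordered pairs (and conversely), so each collision on one side yields a bona fide collision on the other; carrying out that finite case check completes the proof.
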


A Sidon sequence with $m$ elements over an abelian group with $n$
elements is called \emph{optimal} if all Sidon sequences over an
abelian group with $n$ elements have at most $m$ elements. In view
of Theorem~\ref{thm:diff} bounds on the size of a Sidon sequence
(on the number of elements $m$) can be derived by considering
difference and not sums. This is important since the number of
distinct sums is $\binom{m}{2} +m = \frac{m^2 +m}{2}$ while the
number of distinct differences is considerably higher, $m(m-1) =
m^2 -m$. This yields a better upper bound on $m$. A Sidon sequence
$\cD$ is a set of $m$ elements. If the abelian group is $\Z_n$
then $\cD$ can be represented as a binary cyclic sequence $s=[s_0
s_1 , \cdots , s_{n-1} ]$, where $s_i =1$ if $i \in \cD$.

One-dimensional synchronization patterns were first introduced by
Babcock in connection with radio interference~\cite{Bab}. Other
applications are discussed in details in~\cite{BlGo77} and some
more are given in~\cite{ASU,LaSa88}. The two-dimensional
applications and related structures were first introduced
in~\cite{GoTa82} and discussed in many papers,
e.g.~\cite{GoTa84,Rob85,Games87,BlTi88,Rob97}. Recent
new application in keys predistribution for wireless sensor
networks~\cite{BEMP} led to new related two-dimensional problems
concerning these patterns~\cite{BEMP1,BEMP2}. Difference pattern and
Sidon sequences have an important role in the construction of synchronization patterns.

Some of the applications of Sidon sequence is due to the
difference properties implied by Theorem~\ref{thm:diff}. This
property is also the basis of the applications to two-dimensional
synchronization patterns. There are various papers,
e.g.~\cite{BEMP1,Etz11,Rob97} in which an one-dimensional sequence
(as a Sidon sequence or a ruler) is transformed into a
two-dimensional synchronization pattern. The main goal of this
paper is to establish the inverse transformation, in which a
two-dimensional synchronization pattern is transformed into a
Sidon sequence, which is a one-dimensional sequence.

The rest of this paper is organized as follows. In
Section~\ref{sec:period} we define what is a period in a
two-dimensional array and as a result we obtain a definition for a
cyclic two-dimensional array. In Section~\ref{sec:DDCs} we discuss
various types of two-dimensional synchronization patterns. In
particular we discuss periodic two-dimensional synchronization
patterns. In Section~\ref{sec:folding} we present two operations,
namely, folding and unfolding. Folding generates a two-dimensional
array from an one-dimensional sequence. Unfolding is the inverse
operation and it generates an one-dimensional sequence from a
two-dimensional array. In particular we will prove that these
operations relate periodic sequences to periodic two-dimensional
arrays and vice-versa. Moreover, they relate an one-dimensional
synchronization sequence to a two-dimensional synchronization
array and vice-versa, if the two-dimensional array is periodic. As
a consequence we obtain the main result of the paper that
two-dimensional periodic synchronization arrays which can be
unfolded are equivalent to Sidon sequences over $\Z_n$, where $n$
is the size of one period in the array. In Section~\ref{sec:new}
we present a construction of optimal Sidon sequences with $q-1$
elements over a group with $q(q-1)$ elements, where $q$ is a power
of a prime. This generalizes a similar result where $q$ is a
prime. Section~\ref{sec:conclude} contains conclusions and
problems for further research.

\section{Periodicity of Two-Dimensional Arrays}
\label{sec:period}

\subsection{periodic sequences and arrays}

It is very simple to define the periodicity for one-dimensional
sequences. An infinite sequence $S= \ldots s_{-1} , s_0 , s_1 ,
s_2 , \ldots$ is periodic if there exists an integer $\pi$ such
that $s_{i+\pi} = s_i$ for each $i \in \Z$. If $\pi$ is the
smallest integer for which the sequence has this property then we
say that $\pi$ is the \emph{period} of the sequence and write the
sequence as $[s_0 , s_1 ,\ldots , s_{\pi -1}]$, and say that the
sequences is a \emph{cyclic sequence} or a \emph{cycle}. It is
well known that
\begin{theorem}
\label{thm:period_one} If $\pi$ is the period of a sequence $S$,
and there exists an integer $\rho$ such that $s_{i+\rho} = s_i$
for each $i \in \Z$, then $\pi$ divides $\rho$.
\end{theorem}

Usually, an infinite two-dimensional array $\cA$ is said to be
doubly periodic if there exists two integers $\kappa$ and $\eta$
such that for each $i,j \in \Z$ we have $\cA(i+\kappa
,j)=\cA(i,j+\eta)=\cA(i,j)$. But, it appears that this definition
is too restricted. A generalized definition, which give more
information, is as follows. An infinite two-dimensional array $A$
is \emph{doubly periodic} if there exists two linearly independent
integer vectors $(\pi_1 , \pi_2 )$ and $(\xi_1 , \xi_2 )$ such
that each $i,j \in \Z$ satisfy $\cA(i+\pi_1 ,j+\pi_2 )=\cA(i+\xi_1
,j+\xi_2 )=\cA(i,j)$. How we can define the smallest vectors with
this property? What is the period of the array? and what is a
cyclic two-dimensional array? These questions will be answered
after two necessary definitions, of tiling and lattices, will be
presented.

\subsection{Tiling}

Tiling is one of the most basic concepts in combinatorics. We say
that a two-dimensional shape $\cS$ tiles the two-dimensional
square grid $\Z^2$ if disjoint copies of $\cS$ cover $\Z^2$. This
cover of $\Z^2$ with disjoint copies of $\cS$ is called a {\it
tiling} of $\Z^2$ with $\cS$. For each shape $\cS$, in the tiling,
we distinguish one of the points of $\cS$ to be the {\it center}
of $\cS$. Each copy of $\cS$ in a tiling has the center in the
same related point. The set $\cT$ of centers in a tiling defines
the tiling, and hence the tiling is denoted by the pair
$(\cT,\cS)$. Given a tiling $(\cT,\cS)$ and a grid point
$(i_1,i_2)$ we denote by $c(i_1,i_2)$ the center of the copy of
$\cS$, $\cS'$, for which $(i_1,i_2) \in \cS'$. We will also assume
that the origin is a center of a copy of $\cS$. The first lemma
given in~\cite{Etz11} can be easily verified.
\begin{lemma}
\label{lem:center} For a given tiling $(\cT,\cS)$ and a point
$(i_1,i_2)$ the point
$(i_1,i_2)-c(i_1,i_2)$ belongs to the shape
$\cS$ whose center is in the origin.
\end{lemma}

\subsection{Lattices and Lattice Tiling}

One of the most common types of tiling is a {\it lattice tiling}.
A two-dimensional {\it lattice} $\Lambda$ is a discrete, additive subgroup of the
real two-dimensional space $\R^2$. W.l.o.g., we can assume that

\begin{equation}
\label{eq:lattice_def} \Lambda = \{ u_1 v_1 + u_2v_2 ~:~ u_1, u_2 \in \Z \}
\end{equation}
where $v_1, ~v_2$ are two linearly independent
vectors in $\R^2$. A lattice $\Lambda$ defined by
(\ref{eq:lattice_def}) is a sublattice of $\Z^2$ if and only if
$\{ v_1,v_2\} \subset \Z^2$. We will be interested
solely in sublattices of $\Z^2$. The vectors $v_1,v_2$
are called {\it basis} for $\Lambda \subseteq \Z^2$, and the $2
\times 2$ matrix
$$
{\bf G}=\left[\begin{array}{cc}
v_{11} & v_{12} \\
v_{21} & v_{22} \end{array}\right]
$$
having these vectors as its rows is said to be the {\it generator
matrix} for $\Lambda$. Note, that it is always possible to use a
generator matrix ${\bf G}$ in which all the four entries are
nonzeroes. It is also always possible to have in ${\bf G}$ exactly
one {\it zero} entry.

The {\it volume} of a lattice $\Lambda$, denoted $V( \Lambda )$,
is inversely proportional to the number of lattice points per unit
volume. More precisely, $V( \Lambda )$ may be defined as the
volume of the {\it fundamental parallelogram} $\Pi(\Lambda)$ in
$\R^2$, which is given by
$$
\Pi(\Lambda) \deff\ \{ \xi_1 v_1  + \xi_2 v_2 ~:~ 0 \leq \xi_i < 1, ~ ,i=1,2 \}
$$
There is a simple expression for the volume of $\Lambda$, namely,
$V(\Lambda)=| \det {\bf G} |$.

We say that $\Lambda$ induces a {\it lattice tiling} of $\cS$ if
the lattice points can be taken as the set $\cT$ to form a tiling
$(\cT,\cS)$.

\subsection{Cyclic Arrays and Periods}

We are now in a position to define the period of a doubly periodic
array and to define cyclic two-dimensional arrays. Let $\cA$ be a
doubly periodic two-dimensional array. Let $(\pi_1 , \pi_2 )$ and
$(\xi_1 , \xi_2 )$ two linearly independent vectors such that each
$i,j \in \Z$ satisfy $\cA(i+\pi_1 ,j+\pi_2 )=\cA(i+\xi_1 ,j+\xi_2
)=\cA(i,j)$. Let $s$ be the volume of the lattice formed from
$(\pi_1 , \pi_2 )$ and $(\xi_1 , \xi_2 )$. Let $(\pi_3 , \pi_4 )$
and $(\xi_3 , \xi_4 )$ be two linearly independent vectors for
which, each $i,j \in \Z$ satisfy $\cA(i+\pi_3 ,j+\pi_4
)=\cA(i+\xi_3 ,j+\xi_4 )=\cA(i,j)$. Let $s'$ be the volume of the
lattice formed from $(\pi_3 , \pi_4 )$ and $(\xi_3 , \xi_4 )$. If
$s' \geq s$ for each such pair of linearly independent vectors
then we say that $\{ (\pi_1 , \pi_2 ) , (\xi_1 , \xi_2 ) \}$ is
the \emph{period} of $\cA$ and $s$ is the \emph{volume} of $\cA$.
The period in the one-dimensional case has the role of the period
and the volume in the two-dimensional case.

Clearly, the period of a two-dimensional array is not unique. The
volume of the array is unique and can be calculated
from the given period. We have a
theorem in the two-dimensional case which is akin to Theorem~\ref{thm:period_one}.

\begin{theorem}
Let $\cA$ be a doubly periodic array with period $\{ (\pi_1 ,
\pi_2 ) , (\xi_1 , \xi_2 ) \}$. Let $(\pi_3 , \pi_4 )$ and $(\xi_3
, \xi_4 )$ be two linearly independent vectors for which, each
$i,j \in \Z$ satisfy $\cA(i+\pi_3 ,j+\pi_4 )=\cA(i+\xi_3 ,j+\xi_4
)=\cA(i,j)$. Let $s'$ be the volume of the lattice formed from
$(\pi_3 , \pi_4 )$ and $(\xi_3 , \xi_4 )$. If $s$ is the volume of
the lattice formed from $(\pi_1 , \pi_2 )$ and $(\xi_1 , \xi_2 )$
then $s$ divides $s'$.
\end{theorem}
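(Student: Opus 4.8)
The plan is to show that the full set of periodicity vectors of $\cA$ forms a full-rank sublattice of $\Z^2$, that the period is a \emph{basis} of this lattice, and that the volume $s'$ of any competing pair is an integer multiple of $s$. This mirrors exactly how Theorem~\ref{thm:period_one} follows in the one-dimensional case, where the set of periods of $S$ is the subgroup $\pi\Z$ and $\rho$ is one of its elements.

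First I would let $L \subseteq \Z^2$ denote the set of all integer vectors $(a,b)$ satisfying $\cA(i+a,j+b)=\cA(i,j)$ for every $i,j\in\Z$. The key observation is that $L$ is closed under addition and negation: if $(a,b),(c,d)\in L$, then shifting first by $(c,d)$ and then by $(a,b)$ gives $\cA(i+a+c,j+b+d)=\cA(i+c,j+d)=\cA(i,j)$, so $(a+c,b+d)\in L$, while $(-a,-b)\in L$ is immediate and $(0,0)\in L$ trivially. Hence $L$ is a subgroup of $\Z^2$. Since $\cA$ is doubly periodic, $L$ contains two linearly independent vectors, so $L$ is a full-rank lattice; let $V(L)$ denote its volume.

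Next I would invoke the standard index--volume relation: if $\Lambda'\subseteq L$ are both full-rank lattices, then $V(\Lambda')=[L:\Lambda']\cdot V(L)$, where the index $[L:\Lambda']$ is a positive integer. This is seen by writing a generator matrix of $\Lambda'$ as $\mathbf{G}'=M\mathbf{G}$ for an integer matrix $M$ with $|\det M|=[L:\Lambda']$, so that $V(\Lambda')=|\det\mathbf{G}'|=|\det M|\,|\det\mathbf{G}|=[L:\Lambda']\,V(L)$. Now any pair of linearly independent periodicity vectors lies in $L$ and generates a full-rank sublattice, whose volume is therefore a positive integer multiple of $V(L)$, in particular at least $V(L)$. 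Conversely, any basis of $L$ itself consists of periodicity vectors and attains volume exactly $V(L)$. Since $s$ is defined as the minimal such volume, we conclude $s=V(L)$, and the period $\{(\pi_1,\pi_2),(\xi_1,\xi_2)\}$ is a basis of $L$.

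Finally, the vectors $(\pi_3,\pi_4)$ and $(\xi_3,\xi_4)$ are linearly independent periodicity vectors, hence they lie in $L$ and generate a full-rank sublattice $\Lambda'$ with $V(\Lambda')=s'$. Applying the index formula gives $s'=[L:\Lambda']\cdot V(L)=[L:\Lambda']\cdot s$, and since the index is a positive integer, $s\mid s'$, as required. I expect the only genuinely technical point to be the justification of the index--volume relation $V(\Lambda')=[L:\Lambda']\,V(L)$; once that is in hand, the entire statement reduces to the elementary fact that the periodicity vectors form a lattice of which the period is a basis.
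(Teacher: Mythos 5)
Your argument is correct and complete: the set $L$ of all periodicity vectors is indeed a subgroup of $\Z^2$ of full rank, the index--volume relation $V(\Lambda')=[L:\Lambda']\,V(L)$ is standard, and the identification $s=V(L)$ (with the period being a basis of $L$) follows exactly as you describe, giving $s'=[L:\Lambda']\cdot s$. The paper states this theorem without any proof, so there is nothing to compare against; your argument is the natural two-dimensional analogue of Theorem~\ref{thm:period_one} that the author evidently had in mind, and it fills the gap cleanly.
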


A shape $\cS$ will be called \emph{cyclic} if there is a lattice
tiling $\Lambda$ for $\cS$. In a cyclic sequence the order of the
elements, in the sequence, is obvious. It is less obvious for a
two-dimensional shape. We will discuss this order in
Section~\ref{sec:folding}.

\section{Two-Dimensional Synchronization Arrays}
\label{sec:DDCs}

Several types of two-dimensional synchronization arrays are
defined in the literature. We start we a general definition which
was given in~\cite{BEMP1,BEMP2}. Let $\cS$ be a given shape, on
the square grid, with $m$ dots on grid points. $\cS$ is called a
\emph{distinct difference configuration} (DDC) if the
$\binom{m}{2}$ lines connecting dots are distinct either in their
length or in their slope. Several types of DDCs were defined in
the literature. The main focus of research which was done on this
topic is related to Costas arrays. A {\it Costas array} is an $m
\times m$ permutation array having exactly one dot in each row and
each column. Some results on Costas arrays are given
in~\cite{GoTa82,GoTa84,Gol84,Dra06,GoGo07}.

We now present a definition for a doubly periodic DDC. A
\emph{doubly periodic $\cS$-DDC} is a doubly periodic
two-dimensional array $\cA$ with period $\{ (\pi_1 , \pi_2 ) ,
(\xi_1 , \xi_2 ) \}$ such that the following three properties are
satisfied.

\begin{itemize}
\item The lattice formed by $(\pi_1 , \pi_2 )$ and $(\xi_1 , \xi_2
)$ is a lattice tiling for $\cS$.

\item Each copy of $\cS$ on the two-dimensional arrays $\cA$ is a
DDC.

\item In each two copies of $\cS$ in the tiling, the positions of
the dots are the same.
\end{itemize}

Doubly periodic DDCs and in particular Costas array were
considered in the past, e. g.~\cite{Etz11,MGC97,MoGo06}. There are
two essential constructions for Costas arrays, both of them form
doubly periodic DDCs. The first construction is due to Welch and
the second Construction is due to Golomb (with a variant of
Lempel)~\cite{GoTa82,GoTa84,Gol84}. We will present both of them
in their doubly periodic version.

\noindent {\bf The periodic Welch Construction:}

Let $\alpha$ be a primitive root modulo a prime $p$ and let $\cA$
be the square grid. For any integers $i$ and $j$, there is a dot
in $\cA(i,j)$ if and only if $\alpha^i \equiv j \bmod p$.

\begin{theorem}
Let $\cA$ be the array of dots from the Periodic Welch
Construction. Then $\cA$ is a doubly periodic $\cS$-DDC with
period $\{ (0,p),(p-1,0) \}$ and $\cS$ is a $p \times (p-1)$
rectangle.
\end{theorem}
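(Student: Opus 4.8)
The plan is to verify the two stated periodicity vectors $(p-1,0)$ and $(0,p)$, confirm that they constitute the period in the sense of Section~\ref{sec:period}, and then check the three defining properties of a doubly periodic $\cS$-DDC in turn.

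First I would establish periodicity. Since $\alpha$ is a primitive root modulo $p$ it has multiplicative order $p-1$, so $\alpha^{i+(p-1)}\equiv\alpha^i \pmod p$ and therefore $\cA(i+(p-1),j)=\cA(i,j)$; and because the defining congruence $\alpha^i\equiv j\pmod p$ constrains $j$ only modulo $p$, we get $\cA(i,j+p)=\cA(i,j)$. The vectors $(p-1,0)$ and $(0,p)$ are linearly independent, so $\cA$ is doubly periodic. To see that they form the period, I would characterize all periodicity vectors $(\rho_1,\rho_2)$: the identity $\cA(i+\rho_1,j+\rho_2)=\cA(i,j)$ forces $\alpha^i(\alpha^{\rho_1}-1)\equiv\rho_2\pmod p$ to hold for every $i$, and since $\alpha^i$ runs over all nonzero residues as $i$ varies, this is possible only when $\alpha^{\rho_1}\equiv 1$ and $\rho_2\equiv 0\pmod p$, i.e. $(p-1)\mid\rho_1$ and $p\mid\rho_2$. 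Hence the set of periodicity vectors is exactly the lattice $\Lambda=\langle (p-1,0),(0,p)\rangle$; any admissible linearly independent pair generates a sublattice of $\Lambda$ and so has volume at least $V(\Lambda)=p(p-1)$, which the stated pair attains, confirming it is the period.

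For the first property of the definition, $\Lambda$ is axis-aligned and rectangular with $V(\Lambda)=p(p-1)$, equal to the area of the $p\times(p-1)$ rectangle $\cS$; an axis-aligned rectangular lattice whose cell matches the rectangle manifestly tiles $\Z^2$, so $(\Lambda,\cS)$ is a lattice tiling. The third property is immediate from double periodicity: each copy of $\cS$ is a translate of the fundamental copy by a vector of $\Lambda$, and since every vector of $\Lambda$ is a periodicity vector, the dot positions in any two copies coincide.

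The substance of the proof, and the step I expect to be the main obstacle, is the second property: that the copy of $\cS$ in the fundamental cell $\{0\le i\le p-2,\ 0\le j\le p-1\}$ is a DDC. This copy has exactly $p-1$ dots, one in each row, at $(i,\alpha^i\bmod p)$ with $\alpha^i\bmod p\in\{1,\dots,p-1\}$. I would use that being a DDC is equivalent to all ordered difference vectors between distinct dots being distinct, since two segments agree in both length and slope precisely when their difference vectors are equal up to sign. So suppose $(i_1-i_2,\ \alpha^{i_1}-\alpha^{i_2})=(i_3-i_4,\ \alpha^{i_3}-\alpha^{i_4})$ with the second coordinates taken as genuine integers. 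Put $d=i_1-i_2=i_3-i_4$ and reduce the second equation modulo $p$, using $\alpha^{i_1}-\alpha^{i_2}\equiv\alpha^{i_2}(\alpha^{d}-1)$ and similarly for the other pair. The delicate point is cancelling $\alpha^{d}-1$: because all row indices lie in a window of length $p-1$, distinct dots give $0<|d|\le p-2$, so $(p-1)\nmid d$ and hence $\alpha^{d}\not\equiv 1\pmod p$, making $\alpha^{d}-1$ invertible modulo $p$. Cancelling yields $\alpha^{i_2}\equiv\alpha^{i_4}$, whence $i_2=i_4$ and then $i_1=i_3$, so the two pairs coincide. The main care is to keep the genuine integer differences inside the cell separate from their reductions modulo $p$, and to dispose of the degenerate case $d=0$ (which forces $i_1=i_2$) as non-admissible.
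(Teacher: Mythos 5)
The paper states this theorem without any proof --- it is quoted as a known property of the Welch construction from the Costas-array literature (e.g.\ Golomb--Taylor) --- so there is no in-paper argument to compare yours against. On its own merits your proof is correct and complete. You verify both periodicity vectors from the order of $\alpha$ and from the congruence on $j$; you correctly observe (and actually prove, which the paper's definition of ``period'' requires but which is easy to overlook) that these vectors are minimal, via the identity $\alpha^i(\alpha^{\rho_1}-1)\equiv\rho_2 \pmod p$ forcing $(p-1)\mid\rho_1$ and $p\mid\rho_2$; the lattice-tiling and same-dot-positions properties follow as you say. The heart of the matter, the DDC property of one fundamental copy, is handled by the standard and correct argument: equal difference vectors give $\alpha^{i_2}(\alpha^{d}-1)\equiv\alpha^{i_4}(\alpha^{d}-1)\pmod p$ with $0<|d|\le p-2$, so $\alpha^{d}-1$ is invertible and the two pairs of dots coincide. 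Your care in distinguishing integer differences from their residues, and in reducing the length-and-slope condition to equality of difference vectors up to sign, is exactly what is needed; I see no gap.
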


\vspace{1mm} \noindent {\bf The periodic Golomb Construction:}

Let $\alpha$ and $\beta$ be two primitive elements in GF($q$), where
$q$ is a prime power. For any integers $i$ and $j$, there is a dot in
$\cA(i,j)$ if and only if $\alpha^i + \beta^j =1$.

\begin{theorem}
\label{thm:periodicGolomb} Let $\cA$ be the array of dots from the
Periodic Golomb Construction. Then $\cA$ is a doubly periodic
$\cS$-DDC with period $\{ (0,q-1),(q-1,0) \}$ and $\cS$ is a
$(q-1) \times (q-1)$ square.
\end{theorem}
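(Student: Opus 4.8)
The plan is to verify, in order, the three properties in the definition of a doubly periodic $\cS$-DDC, devoting most of the effort to the distinct-difference property. Throughout I write the dots of one period as those pairs $(i,j)$ with $i,j\in\{0,1,\dots,q-2\}$ for which $\alpha^i+\beta^j=1$.

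First I would establish the periodicity and the tiling. Since $\alpha$ and $\beta$ are primitive in GF$(q)$, each has multiplicative order $q-1$, so $\alpha^{i+(q-1)}=\alpha^i$ and $\beta^{j+(q-1)}=\beta^j$ for all $i,j$. Hence the relation $\alpha^i+\beta^j=1$ is unchanged when $i$ or $j$ is shifted by $q-1$, which shows $\cA(i+(q-1),j)=\cA(i,j+(q-1))=\cA(i,j)$; thus the linearly independent vectors $(q-1,0)$ and $(0,q-1)$ are periods. They generate the sublattice $(q-1)\Z\times(q-1)\Z$ of $\Z^2$, of volume $(q-1)^2$, and by the discussion of lattice tilings this lattice induces a lattice tiling of the $(q-1)\times(q-1)$ square $\cS$. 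The third property, that every copy of $\cS$ carries the dots in the same relative positions, is immediate from this periodicity. That the volume $(q-1)^2$ is minimal, so that $\{(0,q-1),(q-1,0)\}$ is genuinely the period, follows as in the preceding theorem: exactly the columns $i\equiv 0\pmod{q-1}$ are empty while every other column holds one dot, forcing the horizontal period to equal $q-1$, and symmetrically for rows.

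The heart of the argument is the DDC property of a single copy of $\cS$. Because a dot occupies position $(i,j)$ precisely when $\beta^j=1-\alpha^i$ is nonzero, that is, when $\alpha^i\neq 1$, each copy contains exactly $q-2$ dots, and for the DDC condition it suffices to prove that the ordered difference vectors between distinct dots are pairwise distinct; distinct ordered differences force any two connecting segments to differ in length or in slope. Suppose four dots satisfy $i_1-i_2=i_3-i_4$ and $j_1-j_2=j_3-j_4$, and set $a=\alpha^{i_1-i_2}$ and $b=\beta^{j_1-j_2}$; as $\alpha,\beta$ have order $q-1$, these same values of $a$ and $b$ arise from the third and fourth dots. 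Writing $\alpha^{i_k}=1-\beta^{j_k}$ and eliminating, one reduces the four membership equations to the single relation $1-a=(b-a)\beta^{j_2}$, and identically $1-a=(b-a)\beta^{j_4}$.

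The case analysis then closes the proof. If $a\neq b$ then $\beta^{j_2}=\beta^{j_4}$, hence $j_2=j_4$, and successively $j_1=j_3$, $\alpha^{i_2}=\alpha^{i_4}$, $i_2=i_4$, and $i_1=i_3$, so the two ordered pairs coincide. If instead $a=b$, the relation $1-a=(b-a)\beta^{j_2}=0$ forces $a=1$, whence $\alpha^{i_1-i_2}=\beta^{j_1-j_2}=1$ and the first pair is degenerate, contradicting the assumption that its two dots are distinct. I expect the main obstacle to be the elimination step: reducing the four equations $\alpha^{i_k}+\beta^{j_k}=1$ to the clean form $1-a=(b-a)\beta^{j_2}$, which uses $\beta^{j_2}\neq 1$ (valid since $\alpha^{i_2}=1-\beta^{j_2}\neq 0$), and then confirming that the degenerate possibility $a=b=1$ is excluded exactly because each pair consists of two distinct dots.
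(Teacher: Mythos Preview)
The paper states this theorem without proof, treating it as a classical fact about the Golomb construction (with references to \cite{GoTa82,GoTa84,Gol84}), so there is no in-paper argument to compare against. Your verification is correct and is essentially the standard one: periodicity is immediate from $\alpha,\beta$ having multiplicative order $q-1$; the empty row $j\equiv 0$ and empty column $i\equiv 0$ force every period vector into $(q-1)\Z\times(q-1)\Z$, giving minimality of the volume $(q-1)^2$; and your reduction of the four dot equations to $1-a=(b-a)\beta^{j_2}=(b-a)\beta^{j_4}$ cleanly disposes of both the generic case $a\neq b$ and the degenerate case $a=b=1$. The only cosmetic point is that the DDC definition in the paper asks that the $\binom{m}{2}$ connecting \emph{lines} be distinct, which is a priori weaker than distinctness of ordered difference vectors; you implicitly use (correctly) that the two formulations are equivalent.
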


There are many important questions concerning Costas arrays. A few
of them are related to the periodicity of the arrays. In
particular we have the following two question:

\begin{enumerate}
\item Is the Welch construction generates all singly periodic
Costas arrays? where a singly periodic Costas array of order $n$
is an $n \times \infty$ array in which each $n \times n$ sub-array
is a Costas array. Welch Construction has this property for
$n=p-1$.

\item Are there more constructions for Costas arrays with
periodicity property?
\end{enumerate}

The conjecture is NO for both questions. Some evidence that this
conjecture is true is given in~\cite{EGT89}. In fact it should be
said that is very likely that most if not all Costas arrays are
known since they derived from the known
constructions~\cite{DIR10}. In what follows we will throw more
evidence for the difficulty to produce new doubly periodic
$\cS$-DDCs with many dots, different from those constructed by
folding~\cite{Etz11}.

Costas arrays are only one family of DDCs, and doubly periodic
DDCs. Two other families which were considered in the literature,
are the sonar sequences~\cite{GoTa82,Games87,MGC97,EGRT92}, and
the Golomb rectangles~\cite{Rob85,Rob97}.

\section{The Folding and Unfolding Methods}
\label{sec:folding}

This section is devoted to with a transformation of a periodic
sequence into a doubly periodic array and a transformation of a
doubly periodic array into a periodic sequence. The two
transformations will be called folding and unfolding,
respectively, and as one might expect these two transformations
are inverse of each other. As a consequence of the definition of
folding we will be able to define the order of the elements in a
cyclic shape. We will give the known theorems on the necessary and
sufficient conditions that a folding exists. Based on these
results we will define the inverse operation of unfolding. This
will lead to the main theorem which will state when a doubly
periodic two-dimensional DDC (or a cyclic DDC) is unfolded into a
Sidon sequences.

The definition of folding involves a lattice tiling $(\cT,\cS)$,
where $\cS$ is the shape on which the folding is performed. A {\it
direction} is a nonzero integer vector $(d_1,d_2)$, where $d_1 ,
d_2 \in \Z$.

Let $\cS$ be a two-dimensional shape and let $\delta=(d_1,d_2)$ be
a direction. Let $\Lambda$ be a lattice tiling for a shape $\cS$,
and let $\cS_1$ be the copy of $\cS$, in the related tiling, which
includes the origin. We define recursively a \emph{folded-row}
starting in the origin. If the point $(i_1,i_2)$ is the current
point of $\cS_1$ in the folded-row, then the next point on its
folded-row is defined as follows:
\begin{itemize}
\item If the point $(i_1+d_1,i_2+d_2)$ is in
$\cS_1$ then it is the next point on the folded-row.

\item If the point $(i_1+d_1,i_2+d_2)$ is in $\cS_2 \neq \cS_1$
whose center is in the point $(c_1,c_2)$ then
$(i_1+d_1-c_1,i_2+d_2-c_2)$ is the next point on the folded-row
(by Lemma~\ref{lem:center} this point is on $\cS_1$).
\end{itemize}

The definition of folding is based on a lattice $\Lambda$, a
shape $\cS$, and a direction $\delta$. The triple
$(\Lambda,\cS,\delta)$ defines a folding if the definition yields
a folded-row which includes all the elements of $\cS$. It appears
that only $\Lambda$ and $\delta$ determines whether the triple
$(\Lambda,\cS,\delta)$ defines a folding. The role of $\cS$ is
only in the order of the elements in the folded-row; and of course
$\Lambda$ must define a lattice tiling for $\cS$.

The first two lemmas proved in~\cite{Etz11} are an immediate
consequence of the definitions and provide us concise conditions
whether the triple $(\Lambda,\cS,\delta)$ defines a folding.

\begin{lemma}
Let $(\cT,\cS)$ be a lattice tiling defined by the two-dimensional
lattice $\Lambda$ and let $\delta = (d_1,d_2)$ be a direction.
$(\Lambda,\cS,\delta)$ defines a folding if and only if the set
$\{ (i \cdot d_1,i \cdot d_2)-c(i \cdot d_1,i \cdot d_2) ~:~ 0
\leq i < | \cS | \}$ contains $| \cS |$ distinct elements.
\end{lemma}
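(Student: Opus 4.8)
The plan is to prove both directions of the equivalence by unwinding the recursive definition of the folded-row into an explicit formula for its points. First I would establish that the $i$-th point visited on the folded-row starting at the origin is exactly $(i\cdot d_1, i\cdot d_2) - c(i\cdot d_1, i\cdot d_2)$. This is the crucial computational lemma and I expect it to follow by induction on $i$. The base case $i=0$ is immediate since the origin is a center (by our standing assumption), so $c(0,0)=(0,0)$ and the $0$-th point is $(0,0)$. For the inductive step, suppose the current point is $p_i = (i\cdot d_1, i\cdot d_2) - c(i\cdot d_1, i\cdot d_2)$, which lies in $\cS_1$ by Lemma~\ref{lem:center}. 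Adding the direction $\delta$ gives $p_i + \delta$; if this lands in $\cS_1$ the next point is $p_i+\delta$, and if it lands in some $\cS_2$ with center $(c_1,c_2)$ the next point is $p_i + \delta - (c_1,c_2)$. The key reduction is to show that in either case this next point equals $((i+1)d_1,(i+1)d_2) - c((i+1)d_1,(i+1)d_2)$, which amounts to verifying that translation by a lattice vector commutes with the center-assignment map $c$, i.e.\ that $c(x + \lambda) = c(x) + \lambda$ for any $\lambda \in \Lambda$, together with the observation that $p_i + \delta$ and $(i+1)\delta$ differ by the lattice vector $c(i\cdot d_1, i\cdot d_2)$.

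With this formula in hand, the theorem becomes essentially tautological. By definition, $(\Lambda,\cS,\delta)$ defines a folding precisely when the folded-row visits all $|\cS|$ elements of $\cS_1$. Since the folded-row is a sequence of points all lying in $\cS_1$, and $\cS_1$ has exactly $|\cS|$ points, the folded-row includes every element of $\cS$ if and only if the points it visits are pairwise distinct --- that is, if and only if the set $\{(i\cdot d_1, i\cdot d_2) - c(i\cdot d_1, i\cdot d_2) : 0 \leq i < |\cS|\}$ contains $|\cS|$ distinct elements. So the forward direction (folding implies distinctness) and the reverse direction (distinctness implies folding) both collapse into the counting observation that a sequence of $|\cS|$ points in a set of size $|\cS|$ is surjective if and only if it is injective.

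I would organize the write-up so that the inductive formula is stated and proved first as the substantive content, after which the equivalence of the two conditions is a one-line pigeonhole argument. The main obstacle is the inductive step, specifically the clean verification that $c$ is equivariant under lattice translations. This relies on the structure of a lattice tiling: because $\Lambda$ is a subgroup of $\Z^2$ and the tiling is generated by translating the single shape $\cS_1$ (with center at the origin) by the lattice points of $\Lambda$, translating any grid point by a lattice vector $\lambda$ sends its containing copy to the copy translated by $\lambda$, hence shifts its center by exactly $\lambda$. Making this precise requires care in how the tiling $(\cT,\cS)$ assigns centers, but it is exactly the kind of bookkeeping the definitions were set up to support, and once it is stated the rest follows mechanically.
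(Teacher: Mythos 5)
Your proposal is correct and is exactly the intended argument: the paper gives no proof of this lemma (it is quoted from \cite{Etz11} as ``an immediate consequence of the definitions''), and the substance is precisely your inductive identity that the $i$-th point of the folded-row is $(i\cdot d_1, i\cdot d_2)-c(i\cdot d_1,i\cdot d_2)$, which rests on the lattice-equivariance $c(x+\lambda)=c(x)+\lambda$ for $\lambda\in\Lambda$, followed by the injective-iff-surjective pigeonhole on the $|\cS|$ points of $\cS_1$. The one point worth making explicit in your write-up is that the folded-row is the orbit of the origin under the bijection of $\cS_1$ induced by adding $\delta$ in $\Z^2/\Lambda$, hence purely periodic with period equal to the order of $\delta$ in that quotient group (a divisor of $|\cS|$), so no new points are ever visited after the first $|\cS|$ steps and restricting the index set to $0\leq i<|\cS|$ loses nothing.
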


\begin{lemma}
Let $(\cT,\cS)$ be a lattice tiling defined by the two-dimensional
lattice $\Lambda$ and let $\delta = (d_1,d_2)$ be a direction.
$(\Lambda,\cS,\delta)$ defines a folding if and only if $(|\cS|
\cdot d_1,\cS| \cdot d_2)-c(|\cS| \cdot d_1,|\cS| \cdot
d_2)=(0,0)$ and for each $i$, $0 < i < |\cS|$ we have $(i \cdot
d_1,i \cdot d_2)-c(i \cdot d_1,i \cdot d_2) \neq (0,0)$.
\end{lemma}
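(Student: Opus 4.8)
The plan is to derive this characterization directly from the preceding lemma, which already reduces the problem to deciding whether the $|\cS|$ points $f(i)\deff(i d_1, i d_2)-c(i d_1, i d_2)$, for $0\le i<|\cS|$, are pairwise distinct. The key move is to reinterpret $f$ as reduction modulo the lattice $\Lambda$ and to carry out the counting inside the finite abelian quotient group $G\deff\Z^2/\Lambda$.

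First I would establish that $\cS_1$, the copy of $\cS$ whose center is the origin, is a complete set of coset representatives for $\Lambda$ in $\Z^2$. This is exactly the tiling hypothesis combined with Lemma~\ref{lem:center}: every grid point $(i_1,i_2)$ lies in a unique copy of $\cS$, its center $c(i_1,i_2)$ is a point of $\cT=\Lambda$, and by Lemma~\ref{lem:center} the point $(i_1,i_2)-c(i_1,i_2)$ is the unique representative of the coset $(i_1,i_2)+\Lambda$ lying in $\cS_1$. Consequently $f(i)$ is precisely the representative in $\cS_1$ of the coset $i\cdot(d_1,d_2)+\Lambda$, so $f(i)=f(j)$ holds if and only if $(i-j)(d_1,d_2)\in\Lambda$. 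Writing $g\deff(d_1,d_2)+\Lambda\in G$ and recalling that $|G|=V(\Lambda)=|\cS|\deff N$, this reads: $f(i)=f(j)$ iff $(i-j)g=0$ in $G$.

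Next I would translate the distinctness condition of the preceding lemma into a statement about the order of $g$. The points $f(0),\dots,f(N-1)$ are distinct iff $f$ is injective on $\{0,\dots,N-1\}$; since the differences $i-j$ then range over $\{1,\dots,N-1\}$, this is equivalent to $k g\neq 0$ for every $k$ with $0<k<N$. Because the representative of the trivial coset in $\cS_1$ is the origin, $kg=0$ in $G$ is the same as $f(k)=(0,0)$, so this is exactly the second stated condition. I would then observe that the first condition, $f(N)=N g=(0,0)$, holds automatically: by Lagrange's theorem the order of $g$ divides $|G|=N$, hence $Ng=0$ in $G$. It is retained in the statement because the two conditions together assert that $N$ is the \emph{first} index at which the folded-row returns to the origin, i.e. that $g$ has order exactly $N$ and therefore $\Span{g}=G$ with $G$ cyclic; this is the algebraic meaning of the folded-row visiting all $N$ points of $\cS_1$ and then closing up.

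Chaining these equivalences with the preceding lemma then yields the result: $(\Lambda,\cS,\delta)$ defines a folding iff the $N$ points are distinct iff $g$ has order $N$ iff the two displayed conditions hold. The only step requiring genuine care is identifying $f$ with reduction modulo $\Lambda$ and $\cS_1$ with a fundamental domain; once that is in place the rest is elementary finite-group bookkeeping. I expect the main obstacle to be phrasing the fundamental-domain claim cleanly, since it rests on the precise tiling conventions—uniqueness of the covering copy and the placement of centers at lattice points—rather than on any nontrivial computation.
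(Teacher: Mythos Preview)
The paper does not supply its own proof of this lemma: it merely states that the first two lemmas are ``proved in~\cite{Etz11}'' and are ``an immediate consequence of the definitions.'' So there is nothing to compare against beyond that remark.

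Your argument is correct and is exactly the kind of direct verification the paper alludes to. Recasting $f(i)=(id_1,id_2)-c(id_1,id_2)$ as reduction modulo $\Lambda$, identifying $\cS_1$ with a set of coset representatives for $G=\Z^2/\Lambda$, and then reading the two conditions as ``$g=(d_1,d_2)+\Lambda$ has order exactly $|\cS|$'' is the clean way to do it. Your observation that the first condition $f(|\cS|)=(0,0)$ is forced by Lagrange (since $|G|=|\cS|$) is a nice bonus that the paper's statement does not make explicit; the real content is the second condition, and your reduction of ``$f(0),\dots,f(N-1)$ distinct'' to ``$f(k)\neq(0,0)$ for $0<k<N$'' via $f(i)=f(j)\iff (i-j)g=0$ is exactly right. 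The only delicate point you flag---that centers lie in $\Lambda$ and that the center is itself a point of its tile, so $f(k)=(0,0)$ iff $k(d_1,d_2)\in\Lambda$---is handled by the tiling conventions the paper sets out.
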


The next theorem determine precisely when the triple
$(\Lambda,\cS,\delta)$ defines a folding.

\begin{theorem}
\label{thm:new_cond_fold2D} Let $\Lambda$ be a lattice whose
generator matrix is given by
$$
{\bf G}=\left[\begin{array}{cc}
v_{11} & v_{12} \\
v_{21} & v_{22}
\end{array}\right]~,
$$
where all the entries of ${\bf G}$ are nonzeroes.
Let $d_1$ and $d_2$ be two positive integers and $\tau =
\text{g.c.d.}(d_1 , d_2)$. If $\Lambda$ defines a lattice tiling
for the shape $\cS$ then the triple $(\Lambda,\cS,\delta)$ defines
a folding

\begin{itemize}
\item with the direction $\delta =(+d_1,+d_2)$ if and only if
$\text{g.c.d.}(\frac{d_1 v_{22}-d_2 v_{21}}{\tau},\frac{d_2
v_{11}-d_1 v_{12}}{\tau})=1$ and $\text{g.c.d.}(\tau , | \cS
|)=1$;

\item with the direction $\delta =(+d_1,-d_2)$ if and only if
$\text{g.c.d.}(\frac{d_1 v_{22}+d_2 v_{21}}{\tau},\frac{d_2
v_{11}+d_1 v_{12}}{\tau})=1$ and $\text{g.c.d.}(\tau , | \cS
|)=1$;

\item with the direction $\delta =(+d_1,0)$ if and only if
$\text{g.c.d.}(v_{12},v_{22})=1$ and $\text{g.c.d.}(d_1 , | \cS
|)=1$;

\item with the direction $\delta =(0,+d_2)$ if and only if
$\text{g.c.d.}(v_{11},v_{21})=1$ and $\text{g.c.d.}(d_2 , | \cS
|)=1$.
\end{itemize}
\end{theorem}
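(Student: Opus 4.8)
The plan is to reinterpret the folded-row as a walk in the finite quotient group $\Z^2/\Lambda$. For a lattice tiling the set $\cT$ of centers is exactly the set of lattice points, so $c(i d_1, i d_2)\in\Lambda$, and by Lemma~\ref{lem:center} the vector $(i d_1, i d_2)-c(i d_1,i d_2)$ is the canonical representative of the coset $i\cdot(d_1,d_2)+\Lambda$ lying in the copy of $\cS$ centered at the origin. Since $\Lambda$ tiles $\cS$, the group $\Z^2/\Lambda$ has order $|\cS|=|\det\bG|$. The second of the two lemmas preceding the theorem then says that $(\Lambda,\cS,\delta)$ defines a folding precisely when $|\cS|\cdot(d_1,d_2)\in\Lambda$ while $i\cdot(d_1,d_2)\notin\Lambda$ for $0<i<|\cS|$; that is, when the coset $(d_1,d_2)+\Lambda$ has order exactly $|\cS|$ in $\Z^2/\Lambda$, equivalently when it generates the group (which is then forced to be cyclic, so no separate cyclicity hypothesis is needed).

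Next I would compute that order explicitly. Writing $D=\det\bG$ (so $|\cS|=|D|$), membership $k(d_1,d_2)\in\Lambda$ means that $(k d_1,k d_2)=(u_1,u_2)\bG$ has an integer solution; solving with $\bG^{-1}=\frac{1}{D}\left(\begin{smallmatrix} v_{22} & -v_{12}\\ -v_{21} & v_{11}\end{smallmatrix}\right)$ gives $u_1=kA/D$ and $u_2=kB/D$, where $A=d_1 v_{22}-d_2 v_{21}$ and $B=d_2 v_{11}-d_1 v_{12}$. Hence $k(d_1,d_2)\in\Lambda$ iff $D\mid kA$ and $D\mid kB$, i.e. iff $D\mid k\gcd(A,B)$, so the order of $(d_1,d_2)+\Lambda$ equals $|D|/\gcd(|D|,A,B)$. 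Combining this with the previous paragraph, a folding exists in direction $(+d_1,+d_2)$ if and only if $\gcd(|\cS|,A,B)=1$.

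Finally I would show this single condition coincides with the two stated conditions $\gcd(A/\tau,B/\tau)=1$ and $\gcd(\tau,|\cS|)=1$. The key identity, obtained by direct expansion, is $D\,d_1=A v_{11}+B v_{21}$ and $D\,d_2=A v_{12}+B v_{22}$; it yields $\gcd(A,B)\mid\gcd(Dd_1,Dd_2)=|D|\,\tau$, hence $\gcd(A/\tau,B/\tau)\mid|\cS|$. Using $\gcd(A,B)=\tau\cdot\gcd(A/\tau,B/\tau)$ together with this divisibility, the equality $\gcd(|\cS|,A,B)=\gcd\!\big(|\cS|,\tau\gcd(A/\tau,B/\tau)\big)=1$ holds exactly when both $\gcd(\tau,|\cS|)=1$ and $\gcd(A/\tau,B/\tau)=1$, which is the claim for $\delta=(+d_1,+d_2)$. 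The case $\delta=(+d_1,-d_2)$ is identical after replacing $d_2$ by $-d_2$, since $\gcd$ is insensitive to signs and the substitution turns $A,B$ into $d_1 v_{22}+d_2 v_{21}$ and $d_2 v_{11}+d_1 v_{12}$. For the two degenerate directions one has $\tau=d_1$ (resp. $\tau=d_2$), so that $A,B$ share the factor $d_1$ (resp. $d_2$); the condition then reduces to $\gcd(d_1,|\cS|)=1$ with $\gcd(v_{12},v_{22})=1$ (resp. $\gcd(d_2,|\cS|)=1$ with $\gcd(v_{11},v_{21})=1$), where one uses that $\gcd(v_{12},v_{22})$ and $\gcd(v_{11},v_{21})$ each divide $D$.

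The main obstacle is precisely this last equivalence: converting the conceptually clean "generator of $\Z^2/\Lambda$" condition $\gcd(|\cS|,A,B)=1$ into the factored form demanded by the theorem. Everything hinges on the divisibility $\gcd(A/\tau,B/\tau)\mid|\cS|$ coming from the pair of identities above; without it one would only be able to conclude $\gcd(|\cS|,\gcd(A/\tau,B/\tau))=1$ rather than the sharper $\gcd(A/\tau,B/\tau)=1$ that appears in the statement.
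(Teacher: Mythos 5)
Your proof is correct. The paper itself gives no proof of this theorem (it is imported from the reference \cite{Etz11}), so there is nothing internal to compare against; judged on its own, your argument is sound and complete. The reduction of the second lemma to the statement that the coset $(d_1,d_2)+\Lambda$ has order $|\cS|$ in $\Z^2/\Lambda$ is right (Lagrange makes the condition $|\cS|\cdot(d_1,d_2)\in\Lambda$ automatic), the computation of that order as $|D|/\gcd(|D|,A,B)$ via $\bG^{-1}$ is correct, and you correctly identify the one genuinely delicate point: passing from $\gcd(|\cS|,A,B)=1$ to the factored form requires the divisibility $\gcd(A/\tau,B/\tau)\mid |\cS|$, which your identities $Av_{11}+Bv_{21}=Dd_1$ and $Av_{12}+Bv_{22}=Dd_2$ supply. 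Without that divisibility the forward implication $\gcd(|\cS|,\tau h)=1\Rightarrow h=1$ would indeed fail, so flagging it as the crux is exactly right. The degenerate directions $(+d_1,0)$ and $(0,+d_2)$ fall out of the same formula with $\tau=d_1$ (resp.\ $\tau=d_2$), as you note.
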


A direction $\delta$ for which $(\Lambda,\cS,\delta)$ defines a
folding also defines the order of the elements in a cyclic shape
$\cS$. This order is exactly the order of the elements in the
folded-row. It is easy to verify that only $|\cS|-1$ directions
should be considered for the existence of a folding. The order of
elements on $\cS$ is clearly not unique as it was proved
in~\cite{Etz11} that if one direction defines a folding then $\phi
(| \cS |)$ directions define a folding (and they come in pairs of
reverse order), where $\phi$ is the Euler totient function.

The unfolding operation is defined directly from the folding
operation. Let $\Lambda$ be a lattice tiling for a two-dimensional
shape $\cS$ and a let $\delta$ be a direction, for which
$(\Lambda,\cS,\delta)$ defines a folding. Then the folded-row is
the \emph{unfolded sequence} generated from the shape $\cS$. In
the folding, the folded-row indicates to which position of the
array, each element of a given one-dimensional sequence will be
assigned. In the unfolding, the folded-row is actually an
unfolded-row and it indicates to which position of the sequence,
each element of the array is assigned. These definitions are
completely natural and there is no surprise. What is more
interesting is the following theorem which connects
two-dimensional doubly periodic $\cS$-DDCs with Sidon sequences.

\begin{theorem}
\label{thm:2DtoSidon}
Let $\cA$ be a two-dimensional doubly periodic $\cS$-DDC with
period $\{ (\pi_1 , \pi_2 ) , (\xi_1 , \xi_2 ) \}$. Let $\Lambda$
be the lattice tiling of $\cS$ formed from $(\pi_1 , \pi_2 )$ and
$(\xi_1 , \xi_2 )$. If $\delta$ is a direction for which
$(\Lambda,\cS,\delta)$ defines a folding then the folded-row
generated by the unfolding of $\cS$ is a Sidon Sequence.
\end{theorem}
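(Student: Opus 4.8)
The plan is to show that the unfolded sequence inherits the distinct-difference property of the $\cS$-DDC, which by Theorem~\ref{thm:diff} is exactly what it means to be a Sidon sequence over $\Z_n$ with $n = |\cS|$. The key observation is that the folding/unfolding operation is a bijection between positions $0,1,\dots,|\cS|-1$ of the one-dimensional cyclic sequence and the grid points of the copy $\cS_1$ containing the origin, and this bijection respects the difference structure modulo $n$.

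First I would fix notation. Since $(\Lambda,\cS,\delta)$ defines a folding with direction $\delta=(d_1,d_2)$, the folded-row visits, in order, the points $p(i) \deff (i\cdot d_1, i\cdot d_2) - c(i\cdot d_1, i\cdot d_2)$ for $i = 0,1,\dots,|\cS|-1$, and by the preceding two lemmas these are $|\cS|$ distinct points of $\cS_1$, with $p(|\cS|) = (0,0)$. So $i \mapsto p(i)$ is a bijection from $\Z_n$ (with $n=|\cS|$) onto the dots-eligible grid points of $\cS_1$. The crucial additive property I would extract is that, because $c(\cdot,\cdot)$ returns a lattice point and $\Lambda$ is an additive subgroup, the map $i\mapsto (i\cdot d_1, i\cdot d_2) \bmod \Lambda$ is a group homomorphism from $\Z_n$ to the quotient $\Z^2/\Lambda$; indeed $V(\Lambda)=|\cS|=n$ so $\Z^2/\Lambda$ has order $n$, and the folding condition says precisely that $i\mapsto (i d_1, i d_2)+\Lambda$ is a bijection, hence a group isomorphism $\Z_n \cong \Z^2/\Lambda$.

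Next I would translate the DDC property into the difference language of Theorem~\ref{thm:diff}. Let $\cD = \{ i : \text{there is a dot at } p(i) \} \subseteq \Z_n$ be the unfolded sequence. Two dots of $\cS_1$ sit at $p(i)$ and $p(j)$; the line connecting them has a length and a slope determined by the difference vector $p(i)-p(j)$, but since $\cA$ is \emph{doubly periodic} and every copy of $\cS$ carries the dots in the same positions, the relevant invariant is the difference vector taken modulo $\Lambda$, i.e.\ in $\Z^2/\Lambda$. The DDC condition, made periodic, states that the connecting lines are distinct in length-or-slope across the whole periodic array, which is equivalent to saying the difference vectors $p(i)-p(j) \bmod \Lambda$ for ordered pairs $i\neq j$ are all distinct in $\Z^2/\Lambda$. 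Under the isomorphism $\Z_n \cong \Z^2/\Lambda$ of the previous step, $p(i)-p(j) \bmod \Lambda$ corresponds to $(i-j)d_1 , (i-j)d_2) \bmod \Lambda$, which is the image of $i-j \in \Z_n$. Hence distinctness of the difference vectors modulo $\Lambda$ is \emph{equivalent} to distinctness of the differences $i-j$ in $\Z_n$, which is exactly the Sidon condition via Theorem~\ref{thm:diff}.

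The step I expect to be the main obstacle is the careful verification that the periodic DDC property is genuinely equivalent to distinctness of difference vectors in the quotient $\Z^2/\Lambda$, rather than merely in $\Z^2$. One must check that a repeated difference vector modulo $\Lambda$ forces two pairs of dots to be connected by lines that agree in both length and slope (using that a lattice vector is an exact period of the dot pattern and that the connecting-line invariant is translation-invariant), and conversely that two coinciding length-slope pairs yield a genuine collision of difference vectors mod $\Lambda$. Handling the degenerate axis-aligned directions $\delta=(d_1,0)$ and $\delta=(0,d_2)$ allowed by Theorem~\ref{thm:new_cond_fold2D} uniformly, and confirming that the correspondence is well-defined independent of which copy $\cS_1$ is chosen, are the bookkeeping points that need care; once the isomorphism $\Z_n\cong\Z^2/\Lambda$ is in hand, the Sidon conclusion follows immediately from Theorem~\ref{thm:diff}.
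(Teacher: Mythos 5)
Your setup is correct and is essentially the paper's own reduction: the map $i\mapsto (i\,d_1,i\,d_2)+\Lambda$ is indeed an isomorphism $\Z_n\cong\Z^2/\Lambda$ (injectivity is the folding condition, surjectivity follows from $V(\Lambda)=|\cS|=n$), and under it a Sidon violation in the folded-row corresponds exactly to a collision of dot-difference vectors modulo $\Lambda$; the paper encodes the same isomorphism by coloring the grid with folded-row positions. The gap is that you then declare the periodic DDC property ``equivalent'' to distinctness of the $p(i)-p(j)$ in $\Z^2/\Lambda$ and defer the verification --- but the forward implication of that equivalence is the entire content of the theorem, not bookkeeping. Concretely, from $i_2-i_1\equiv i_4-i_3\pmod{n}$ you only obtain $p(i_2)-p(i_1)=p(i_4)-p(i_3)+\lambda$ for some $\lambda\in\Lambda$ which may be nonzero, so the two connecting lines inside the copy of $\cS$ at the origin need not agree in length and slope, and two pairs of dots realizing an exact vector equality need not lie in a common translate of $\cS$. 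One must exhibit a specific translate in which the congruence becomes an exact equality of vectors between two genuinely distinct pairs of dots; this is precisely where the paper's proof does its work, by examining the four copies of $\cS$ centered at points colored $i_1,i_2,i_3,i_4$ and arguing that at least one of them witnesses the DDC violation. Your proposal names this as ``the main obstacle'' and stops, so the proof is missing its only nontrivial step.

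Two smaller imprecisions in the same place: the DDC property is a per-window condition, so ``lines distinct across the whole periodic array'' cannot be meant literally (any pair of dots and its translate by a lattice vector give equal lines); the quantification must be over pairs lying in a common copy of $\cS$. And equality of lines means the difference vectors agree only up to sign, so the ordered-versus-unordered bookkeeping --- in particular the case $2t\equiv 0\pmod{n}$, where the two ordered pairs $(i_1,i_2)$ and $(i_4,i_3)$ may describe a single line --- must be treated separately. In fairness, the paper itself only sketches the decisive step and defers details to its full version, but it at least specifies the mechanism; your write-up identifies the correct invariants and the correct target yet leaves that argument unproved.
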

\begin{proof}
We will give a sketch of the proof. A proof with all the details
will appear in the full version of this work. We assign colors to
the points of the square grid as follows. The points of the shape
$\cS$ whose center is in the origin (say $\cS_0$) are assigned
colors by the order of the folded-row, where the origin is
assigned with a \emph{zero}. Each other copy of $\cS$ in the
tiling is assigned with the same colors as $\cS_0$ in the same
related positions.

Assume the contrary, that the folded-row is not a Sidon sequence.
It follows that there exists two distinct pairs of integers $(i_1
, i_2 )$ and $(i_3,i_4)$, where $i_1$, $i_2$, $i_3$, and $i_4$,
are positions with dots on the folded-row, such that $i_4 - i_3
\equiv i_2 - i_1 ~ (mod~n)$, where $n=|\cS|$. Let $\cS_j$, $1 \leq
j \leq 4$, a copy of $\cS$ on the grid in which a point colored
with $i_j$ is the center. It can be shown that in one of these
four copies the two pair of lines which connects the points
colored with $i_1$ and $i_2$ ad those colored with $i_3$ and $i_4$
are equal in length and slope. A contradiction to the assumption
that $\cA$ is
 a two-dimensional doubly periodic $\cS$-DDC.
\end{proof}

In~\cite{Etz11} the following theorem was proved.
\begin{theorem}
\label{thm:Sidonto2D} Let $\Lambda$ be a lattice tiling for a
two-dimensional shape $\cS$, $n=|\cS|$, and let $\delta$ be a
direction. Let $\cB$ be a Sidon sequence with $m$ elements over
$\Z_n$. If $(\Lambda,\cS,\delta)$ defines a folding then there
exists a two-dimensional doubly periodic $\cS$-DDC $\cA$ with $m$
dots in each copy of $\cS$ of $\cA$.
\end{theorem}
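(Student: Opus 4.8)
The plan is to fold the Sidon sequence onto the shape and then check the three defining properties of a doubly periodic $\cS$-DDC. Concretely, let $s=[s_0,s_1,\dots,s_{n-1}]$ be the binary characteristic sequence of $\cB$, so that $s_k=1$ exactly when $k\in\cB$. Using the folded-row determined by $(\Lambda,\cS,\delta)$, I would place a dot at the grid point of $\cS_0$ (the copy of $\cS$ centered at the origin) whose folded-row index is $k$ precisely when $s_k=1$, and then replicate this dot pattern in every tile of the tiling by translating along $\Lambda$. This produces a doubly periodic array $\cA$ with exactly $m=|\cB|$ dots in each copy of $\cS$. Of the three properties required of a doubly periodic $\cS$-DDC, the first (that $\Lambda$ is a lattice tiling of $\cS$) is a hypothesis and the third (identical dot positions in every copy) holds by construction, so the whole argument reduces to verifying the second property: that the single copy $\cS_0$ is a DDC.

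The key step is to record the algebraic meaning of the folding. Writing $p(k)$ for the grid point of $\cS_0$ with folded-row index $k$, the recursive definition of the folded-row together with Lemma~\ref{lem:center} shows that $p(k)=(k d_1,k d_2)-c(k d_1,k d_2)$, where the center $c(k d_1,k d_2)$ is a point of $\Lambda$; hence $p(k)\equiv k\,(d_1,d_2)\pmod{\Lambda}$. Since $\Lambda$ tiles $\cS$ we have $[\Z^2:\Lambda]=|\cS|=n$, and since $(\Lambda,\cS,\delta)$ defines a folding the $n$ points $p(0),\dots,p(n-1)$ are distinct and exhaust $\cS_0$, hence exhaust the $n$ cosets of $\Lambda$ in $\Z^2$. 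Consequently the map $k\mapsto k\,(d_1,d_2)\bmod\Lambda$ is a bijection, and being a homomorphism out of the cyclic group $\Z_n$ it is a group isomorphism $\Z_n\to\Z^2/\Lambda$. In particular $p(k_2)-p(k_1)\equiv (k_2-k_1)(d_1,d_2)\pmod{\Lambda}$ for all indices.

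With this in hand the DDC property follows from the Sidon property. Suppose, for contradiction, that two distinct lines of $\cS_0$ agree in both length and slope; then their four endpoints $p(a),p(b)$ and $p(c),p(d)$, with $a,b,c,d\in\cB$, satisfy $p(b)-p(a)=\pm\bigl(p(d)-p(c)\bigr)$ as vectors in $\Z^2$. Reducing modulo $\Lambda$ and applying the isomorphism gives $b-a\equiv\pm(d-c)\pmod n$, an equality of two ordered differences of $\cB$. By Theorem~\ref{thm:diff} the Sidon sequence $\cB$ has all $m(m-1)$ ordered differences distinct, so this forces $\{a,b\}=\{c,d\}$ and the two lines to coincide, a contradiction. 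Hence $\cS_0$ is a DDC, and $\cA$ is the desired doubly periodic $\cS$-DDC with $m$ dots per copy.

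The main obstacle I anticipate is not the final counting argument but the careful bookkeeping in the passage from $\Z^2$ to the quotient $\Z^2/\Lambda$: one must be sure that the isomorphism $k\mapsto k\,(d_1,d_2)\bmod\Lambda$ really follows from ``$(\Lambda,\cS,\delta)$ defines a folding'' (this is exactly where the earlier folding lemmas and Theorem~\ref{thm:new_cond_fold2D} enter), and one must handle the sign ambiguity coming from the fact that a line is an \emph{undirected} segment, so that both the $+$ and $-$ cases above must be checked against the distinct-difference property. Everything else is a direct translation between the geometric ``distinct in length or slope'' condition and the algebraic ``distinct differences'' condition of Theorem~\ref{thm:diff}.
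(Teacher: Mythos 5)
Your argument is correct, and it is essentially the intended one: the paper itself gives no proof of Theorem~\ref{thm:Sidonto2D} (it is quoted from~\cite{Etz11}), but your reduction is exactly the mirror image of the coloring argument sketched for Theorem~\ref{thm:2DtoSidon} --- the ``colors'' there are precisely your isomorphism $\Z_n\cong\Z^2/\Lambda$, $k\mapsto k\,(d_1,d_2)\bmod\Lambda$, and the passage between ``equal length and slope'' and ``equal ordered differences mod $n$'' is the same in both directions. Your identification of the two delicate points (justifying the isomorphism from the folding hypothesis, and the $\pm$ sign from undirected lines) is apt, and both are handled correctly. The one place where you should say a little more: the second defining property of a doubly periodic $\cS$-DDC asks that \emph{every} copy of $\cS$ on the array be a DDC, i.e.\ every translate of $\cS$, not only the copy $\cS_0$ in the tiling. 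Your argument covers this with no new ideas --- any translate of $\cS$ is again a complete set of coset representatives of $\Lambda$, its dots lie in the cosets $a\,(d_1,d_2)\bmod\Lambda$ with $a\in\cB$, and an equality of two line vectors in $\Z^2$ still reduces modulo $\Lambda$ to $b-a\equiv\pm(d-c)\pmod n$ --- but as written you only verify the property for $\cS_0$, so that extra sentence should be added.
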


In view of Theorems~\ref{thm:2DtoSidon} and~\ref{thm:Sidonto2D} it
is tempting to prove that "a Sidon sequence over $\Z_n$ with $m$
elements exists if and only if a two-dimensional doubly periodic
$\cS$-DDC with $m$ dots in each copy of $\cS$ exists". But, this
claim is not correct. As we will conclude from the following
discussion, which applies Theorem~\ref{thm:2DtoSidon} on the known
doubly periodic constructions for Costas Arrays.

\begin{example} The following $7 \times 6$ array was obtained by
the periodic Welch Construction for $p=7$ and the primitive root 3
modulo 7.
$$
\begin{array}{|c|c|c|c|c|c|} \hline
&&&&&\\
\hline
&&&\bullet&&\\
\hline
&&&&&\bullet\\
\hline
&&&&\bullet&\\
\hline
&\bullet&&&&\\
\hline
&&\bullet&&&\\
\hline
\bullet&&&&&\\
\hline
\end{array}
$$
By using unfolding with direction $(1,1)$, where the lower left
dot is taken on the origin we obtain the Sidon sequence $\{
0,8,10,11,33,37\}$ modulo 42.
\end{example}
\begin{remark}
A construction such as the periodic Golomb Construction cannot
produce any Sidon sequence since the shape $\cS$ is a square and
the is no direction which defines a folding when $\cS$ is a
square.
\end{remark}

\section{New Optimal Sidon Sequences}
\label{sec:new}

The two celebrating constructions of optimal Sidon sequences are
the ones of Singer~\cite{Sin38} and Bose~\cite{Bose42}. Let $q$ be
a power of a prime number. Singer's construction, which is based
on projective planes, produces a Sidon sequence with $q+1$
elements over $\Z_{q^2 +q+1}$. Bose's construction, which is based
on affine planes, produces a Sidon set with $q$ elements over
$\Z_{q^2-1}$. The construction of Ruzsa~\cite{Ruz93} generates
optimal Sidon sequences with $p-1$ elements taken modulo $p^2 -p$,
where $p$ is a prime number. In this section we generalize this
construction to obtain a Sidon sequence with $q-1$ elements taken
over $(q-1) \times$GF($q$), where $q$ is any power of a prime.
Given a power of a prime $q$ and a primitive element $\alpha$ in
GF($q$), we construct the set $A_{q,\alpha}$ defined by
$$
A_{q,\alpha} = \{ (i, \alpha^i ) ~:~ 0 \leq i \leq q-2 ~ \}
$$

\begin{theorem}
The set $A_{q,\alpha}$ is an optimal Sidon sequence.
\end{theorem}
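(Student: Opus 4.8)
The plan is to work via differences rather than sums, as licensed by Theorem~\ref{thm:diff}. The set $A_{q,\alpha}$ lives in the abelian group $\Z_{q-1}\times\mathrm{GF}(q)$ of order $q(q-1)$, with the first coordinate read modulo $q-1$ and the second in the additive group of $\mathrm{GF}(q)$; the element indexed by $i$ is $(i,\alpha^i)$. The difference of the elements indexed by $i$ and $j$ is $(i-j,\alpha^i-\alpha^j)$, so it suffices to show that the map $(i,j)\mapsto(i-j,\alpha^i-\alpha^j)$ is injective on ordered pairs with $i\ne j$.

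First I would compare two such differences. Suppose $(i-j,\alpha^i-\alpha^j)=(k-l,\alpha^k-\alpha^l)$ for ordered pairs $(i,j)\ne(k,l)$ with $i\ne j$ and $k\ne l$. The first coordinate gives $i-j\equiv k-l\pmod{q-1}$; write this common value as $d$, and note $d\not\equiv 0\pmod{q-1}$ since $i\ne j$. Because $\alpha$ is primitive, $\alpha^{q-1}=1$ and $\alpha^d\ne 1$, so $\alpha^d-1\ne 0$ in $\mathrm{GF}(q)$. The key algebraic step is to factor the second coordinate: from $\alpha^i=\alpha^{j+d}=\alpha^d\alpha^j$ one gets $\alpha^i-\alpha^j=\alpha^j(\alpha^d-1)$, and likewise $\alpha^k-\alpha^l=\alpha^l(\alpha^d-1)$. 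Equating and cancelling the nonzero factor $\alpha^d-1$ yields $\alpha^j=\alpha^l$, hence $j\equiv l\pmod{q-1}$, so $j=l$; then $d=i-j=k-l$ forces $i=k$, contradicting $(i,j)\ne(k,l)$. This establishes that $A_{q,\alpha}$ is a Sidon sequence.

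For optimality I would invoke the difference-counting bound already noted after Theorem~\ref{thm:diff}: in any Sidon sequence with $m$ elements over a group of order $n$, the $m(m-1)$ ordered differences are distinct and nonzero, so $m(m-1)\le n-1$. Here $n=q(q-1)$, so any Sidon sequence over this group satisfies $m(m-1)\le q(q-1)-1$. If $m\ge q$ then $m(m-1)\ge q(q-1)>q(q-1)-1$, a contradiction; hence $m\le q-1$. Since $A_{q,\alpha}$ attains exactly $q-1$ elements, it is optimal.

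The only delicate point is the cancellation step, which rests entirely on verifying $\alpha^d-1\ne 0$. This is where primitivity of $\alpha$ and the correct identification of the first coordinate's modulus with the order $q-1$ of $\alpha$ are essential: one must confirm that $i-j\equiv 0\pmod{q-1}$ is equivalent to $\alpha^{i-j}=1$, so that $i\ne j$ genuinely guarantees an invertible factor. Everything else is a routine unwinding of the group law in $\Z_{q-1}\times\mathrm{GF}(q)$.
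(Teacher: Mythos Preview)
Your proof is correct and follows essentially the same approach as the paper: both arguments reduce to the factorization $\alpha^{j}(\alpha^{d}-1)=\alpha^{l}(\alpha^{d}-1)$ and cancel the nonzero factor $\alpha^{d}-1$ to force a contradiction. The only cosmetic difference is that the paper begins with the sum formulation and immediately rewrites it as a difference equation, whereas you invoke Theorem~\ref{thm:diff} to start with differences; your optimality argument spells out the counting bound that the paper leaves as ``straight forward enumeration''.
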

\begin{proof}
We have to prove that given four integers $i_1$, $i_2$, $i_3$, and
$i_4$, $0 \leq i_1 ,i_2 ,i_3 ,i_4 \leq q-2$, such that $i_1 \neq
i_3$ and $i_2 \neq i_3$ then the two pairs $(i_1 +i_2 ,
\alpha^{i_1} + \alpha^{i_2})$ and $(i_3 +i_4 , \alpha^{i_3} +
\alpha^{i_4})$ are not equal. Assume the contrary, that for four
such integers we have \vspace{-0.19cm}
\begin{equation}
\label{eq:base} i_1 + i_2 \equiv i_3 + i_4 ~(\text{mod}~q-1),~~~~
\alpha^{i_1} + \alpha^{i_2} = \alpha^{i_3} +\alpha^{i_4} ~.
\vspace{-0.1cm}
\end{equation}
Let $t \equiv i_1 - i_3 \equiv i_4 - i_2 ~(\text{mod}~q-1)$, where
clearly we can assume that $0 < t < q-1$. Hence, we replace
(\ref{eq:base}) with the equations \vspace{-0.19cm}
\begin{equation}
\label{eq:newbase} i_1 - i_3 \equiv i_4 - i_2
~(\text{mod}~q-1),~~~~ \alpha^{i_1} - \alpha^{i_3} = \alpha^{i_4}
-\alpha^{i_2} \vspace{-0.1cm}
\end{equation}
We can substitute $i_1 \equiv i_3 +t ~(\text{mod}~q-1)$ and $i_4
\equiv i_2 +t ~(\text{mod}~q-1)$ in (\ref{eq:newbase}) to obtain
the equation \vspace{-0.19cm}
\begin{equation*}
\alpha^{t+i_3} - \alpha^{i_3} = \alpha^{t+i_2} - \alpha^{i_2},
\vspace{-0.1cm}
\end{equation*}
which is equivalent to the equation \vspace{-0.19cm}
\begin{equation}
\label{eq:last} \alpha^{i_3} (\alpha^t -1) = \alpha^{i_2}
(\alpha^t -1) ~. \vspace{-0.1cm}
\end{equation}
Since $0 < t < q-1$, it follows that $\alpha^t -1 \neq 0$ and
hence from (\ref{eq:last}) we have that $\alpha^{i_3} =
\alpha^{i_2}$, i.e., $i_3 = i_2$ which contradicts the original
choice of $i_2$ and $i_3$. Therefore, $A_{q,\alpha}$ is a Sidon
sequence. The optimality of the sequence is a straight forward
enumeration.
\end{proof}

\begin{remark}
The new construction of Sidon sequences does not help in constructing new
doubly periodic $\cS$-DDC since the abelian group is not $\Z_n$.
\end{remark}

\section{conclusions and Future Research}
\label{sec:conclude}

Sidon sets have many applications in coding theory and in
communication problems. Constructions of optimal Sidon sequences
are rare. We defined periodicity and cyclic arrays in the
two-dimensional case. We proved that unfolded optimal doubly
periodic two-dimensional synchronization patterns are optimal
Sidon sequences. Thus, forming some equivalence between the two
structures. All the results concerning two-dimensional arrays are
generalized readily to higher dimensions. We presented a new
construction of optimal Sidon sequences, with $q-1$ elements, over
an alphabet with $q(q-1)$ elements, where $q$ is a power of a
prime. The main problem for future research in this direction is
to find new constructions for optimal doubly periodic DDCs and new
constructions for optimal Sidon sequences.

\section*{Acknowledgment}
This work was supported in part by the United States-Israel
Binational Science Foundation (BSF), Jerusalem, Israel, under
Grant No. 2006097.


\end{document}